\documentclass[11pt,runningheads]{llncs}
\usepackage[T1]{fontenc}
\usepackage{graphicx}
\usepackage{url}

\usepackage{stmaryrd,euscript,float,bbold,amssymb,amsmath,tikz,hyperref,bookmark}
\usepackage{algorithm}
\usepackage[noend]{algpseudocode}
\usepackage{color}

\newtheorem{prop}{Property}
\begin{document}
\title{Distributed computation of temporal twins in periodic undirected time-varying graphs}
\titlerunning{Distributed computation of temporal twins in periodic undirected TVGs}
\author{Lina Azerouk\inst{1}
\and Binh-Minh Bui-Xuan\inst{1}
\and Camille Palisoc\inst{1}
\and Maria Potop-Butucaru\inst{1}
\and Massinissa Tighilt\inst{1}}

\authorrunning{L. Azerouk et al.}
\institute{LIP6 (CNRS -- Sorbonne Universit\'e), \email{[lina.azerouk,buixuan,camille.palisoc,maria.potop-butucaru,massinissa.tighilt]@lip6.fr}}
\maketitle
\begin{abstract}
Twin nodes in a static network capture the idea of being substitutes for each other for maintaining paths of the same length anywhere in the network. In dynamic networks, we model twin nodes over a time-bounded interval, noted $(\Delta,d)$-twins, as follows. A periodic undirected time-varying graph $\mathcal G=(G_t)_{t\in\mathbb N}$ of period $p$ is an infinite sequence of static graphs where $G_t=G_{t+p}$ for every $t\in\mathbb N$. For $\Delta$ and $d$ two integers, two distinct nodes $u$ and $v$ in $\mathcal G$ are $(\Delta,d)$-twins if, starting at some instant, the outside neighbourhoods of $u$ and $v$ has non-empty intersection and differ by at most $d$ elements for $\Delta$ consecutive instants. In particular when $d=0$, $u$ and $v$ can act during the $\Delta$ instants as substitutes for each other in order to maintain journeys of the same length in time-varying graph $\mathcal G$.
We propose a distributed deterministic algorithm enabling each node to enumerate its $(\Delta,d)$-twins in $2p$ rounds, using messages of size $O(\delta_\mathcal G\log n)$, where $n$ is the total number of nodes and $\delta_\mathcal G$ is the maximum degree of the graphs $G_t$'s.
Moreover, using randomized techniques borrowed from distributed hash function sampling, we reduce the message size down to $O(\log n)$ w.h.p.

\keywords{time-varying graph, twin, twin distributed computing.}
\end{abstract}
%
\section{Introduction}
\label{sec:intro}
Maintaining connectivity is an important topic for dynamic networks.
For instance, in the $1$-interval-connected network model~\cite{OW05,KLO10,LV22}, 
if moreover the graph is periodic and each node's local view can be updated entirely every round on contact of its neighbours, then it is proven that a temporally optimal broadcast tree can be maintained in real time~\cite{CFMS14}.
This is important for connectivity because of the real time updates, namely nodes do not need to process a posteriori the pieces of information they received about their neighbourhood.
Here, the $1$-interval-connected model allows for a fixed set of nodes to communicate through a temporal edge set changing at discrete time units called rounds.
The structure as a whole is called a time-varying graph~\cite{CFQS12}, where at every round the graph is also usually assumed to be connected.
\begin{figure}
   \centerline{\includegraphics[scale=0.2]{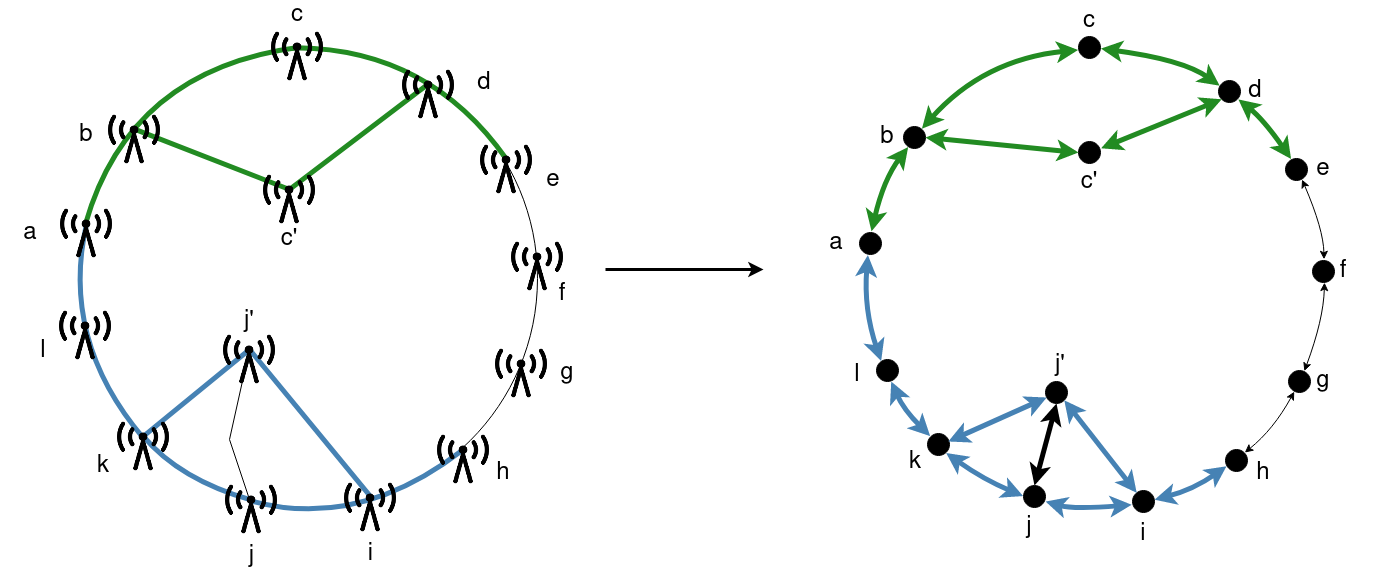}}
    \caption{Network modelled as an undirected graph. Arrows are in double direction to stress that communications are allowed both ways around. The two pairs of twins in the graph are $\{c,c'\}$ and $\{j,j'\}$. A packet traveling between node $a$ and $e$ can choose $(a, b, c, d, e)$ as shortest path. If node $c$ switches off or if it is faulty, the path of same length $(a,b,c',d,e)$ can still be used.
    Between $a$ and $h$, the two paths $(a, l, k, j, i, h)$ and $(a, l, k, j', i, h)$ are shortest, however, $(a,l,k,j,j',i,h)$ is not a shortest path.}
    \label{fig_network}
\end{figure}

In our study, we consider all the above assumptions, except for that we do not require the time-varying graph to be connected at every round.
This is convenient to model adhoc networks with both dense and sparse areas with predictable periodicity such as satellite networks.
We would like to capture by the notion of twin nodes the possibility of backup routes or of a resource saving strategy during the times when some satellites are called twins.
This is why we aim at a formalism where the connectivity in the rest of the network are maintained by journeys of the same length even though one of the twin satellites switches off for sleeping or maintenance during some time window of a given desired length, \textit{cf.} Fig.~\ref{fig_network}.

In a static graph, such a concept can be formalised following the model of $\epsilon$-modules~\cite{HMZ20} and $d$-contractions~\cite{BKTW20}: given an integer $d$, two nodes $u\neq v$ are $d$-twins if their outside neighbourhoods have non-empty intersection while the symmetric difference has size smaller than $d$, that is, the sum of $|N(u)\setminus N(v)\setminus\{u,v\}|$ and $|N(v)\setminus N(u)\setminus\{u,v\}|$ is at most $d$.
We need the former assumption of non-empty intersection for obtaining efficient computational results.
At the same time, isolated nodes or nodes without common neighbours are bad models of backup routes as in the above example of network of satellites.
Hence, we discard them from our definition of $d$-twins which requires nodes to have at least one common neighbour.
From this point, our extension to the dynamic case will follow that of $\Delta$-twins~\cite{BHM20}: given two integers $\Delta$ and $d$, two nodes $u\neq v$ are $(\Delta,d)$-twins in a time-varying graph $\mathcal G=(G_t)_{t\in\mathbb N}$ if there exists $t_0\in\mathbb N$ such that $u$ and $v$ are $d$-twins in every graph $G_t$, for $t_0\leq t<t_0+\Delta$.

\medskip

\textbf{Our contribution:}

We give a distributed deterministic algorithm for every node in a periodic undirected time-varying graph of period $p$ to compute its $(\Delta,d)$-twins after $2p$ rounds, using messages of size $O(\delta_\mathcal G\log n)$, where $\delta_\mathcal G$ is the maximum degree of the graphs $G_t$'s.
When randomized by sampling techniques from~\cite{HNT22}, our messages can be reduced to $O(\log n)$ w.h.p.

The main idea in our algorithm can be divided into two steps.
In the first step, we show how to solve the problem of finding $d$-twins in a static graph $G$ after $2$ rounds.
Roughly, while in $2$ rounds every node can receive messages sent by its $2$-neighbourhood, the receiver node need to detect those sent by its $d$-twins.
In particular, note that while a path over $3$ vertices admits two $0$-twins, a path over $n\geq 4$ vertices has none (because for every node, one vertex from its $3$-neighbourhood breaks down the $0$-twin definition).
In order to avoid examining messages from $k$-neighbourhoods with $k\geq 3$, we exhibit a set property involving the number of paths of length $2$ which allows to detect $d$-twins.
The main idea is to use inclusion-exclusion properties on the neighbourhoods of nodes in order to restrict the message size to $O(\delta_G\log n)$.
Furthermore, by exploiting very simple twist inspired from distributed hash function sampling~\cite{HNT22}, we reduce the message size down to $O(\log n)$ w.h.p.

In a second step, we address the dynamic case.
The main idea here is for the receiver nodes to store information while waiting for the time-varying graph to repeat its edges, using periodicity.
At the same time, a sender node must be detected $\Delta$ consecutive times as a $d$-twin in order to be detected as $(\Delta,d)$-twin.
A special attention is needed for all computations to end after $2p$ rounds (and not $2p+\Delta$).

Eventually, we remark that our algorithm requires the time-varying graph to be undirected, meaning communication on an edge $uv$ must be allowed both way around between $u$ and $v$.
This is because of the step in the static setting where every node attempts to count the number of paths of length $2$ linking itself to every of its $d$-twins.

The paper is organised as follows.
The formal framework of periodic time-varying graphs and $(\Delta,d)$-twins is defined in Section~\ref{sec:def}.
In Section~\ref{sec:disco} we briefly present a property of twins in static graphs, before using it to show a distributed algorithm for every node to compute its $(\Delta,d)$-twins and prove its correctness.
Section~\ref{sec:hash} is devoted to reducing the message size used by our algorithm to $O(\log n)$ w.h.p.
We conclude the paper in Section~\ref{sec:conclu} along with some open remarks for future works.

\section{Model and Problem definition}
\label{sec:def}
We consider distributed systems which are fault-free, message-passing, synchronous, with a unique ID for each process.
This will be formalised as a time-varying graph~\cite{CFQS12} defined by
$\mathcal{G} = (V, E, T, \rho)$, where
$V$ is a finite set of nodes representing processes,
$E\subseteq\binom{V}{2}$ is the underlying set of all possible edges and $\rho : E \times T \to \{0,1\}$ is the presence function defining whether an edge exists in a given round.
Temporal edges are those bound to a specific time instant: $E_t = \{e \in E : \rho(e, t) = 1\}$.
Alternatively, a time-varying graph can also be seen as a sequence of static graphs $\mathcal G=(G_t)_{t\in T}$, where $G_t=(V,E_t)$.

Inspired by the models of dynamic networks in~\cite{CFQS12,KLO10,LV22,OW05}, we suppose furthermore that the time-varying graph is characterized by a periodicity, non-anonymity, synchronicity, and local knowledge.
However, we do not require it to be connected at every time instant.
\begin{itemize}
\item The \textbf{periodicity} means the existence of a specific value $p$ such that, every edge present in the graph at a particular time $t$ has its presence in the graph repeated periodically at $t+p$.
Because of periodicity, we restrict our domain of study down to the cyclic group of $p$ elements and abusively refer to it as $T=\mathbb Z/{p\mathbb Z}=\{0,1,\dots,p-1\}=\llbracket0,p-1\rrbracket$, and subsequently restrict the input time-varying graph to the first $p$ instants, $\mathcal G=(G_t)_{0\leq t<p}$. 
\item The \textbf{non-anonymity} of the graph means that each node $v$ has a unique $ID \in\mathbb N$, we assume that each $ID$ is represented using at most $O(\log n)$ bits, where $n = |V|$.
\item The \textbf{synchronicity} within the graph implies the presence of a global clock that ticks regularly for each $t\in T$, beginning with $t=0$. During each tick of this clock, every node $v$ in the graph performs three sequential actions: sending a message to each neighbouring node, receiving the message from each neighbour, and executing computational tasks. Each occurrence of the clock ticking is called \textit{round}.
\item By \textbf{local knowledge}, we assume that each node $v$ in the graph has the knowledge of the number of its neighbours at any given time $t \in T$, but not their $IDs$. The set of neighbours of node $v$ at time $t$ is denoted as $N_t(v)$.
\end{itemize}
\begin{remark}
If the system does not allow for local knowledge, nodes can still pass their ID to all neighbours at every round $t$.
Then, at round $t+1$, the system satisfies the retro-property of local knowledge of every past round.
  In other words, it is possible to equip any periodic time-varying graph of period $p$ with the local knowledge property after a preprocessing using $p$ rounds and messages of size $O(\log n)$.
\end{remark}
\begin{figure}[t!]
    \centerline{\includegraphics[scale=0.18]{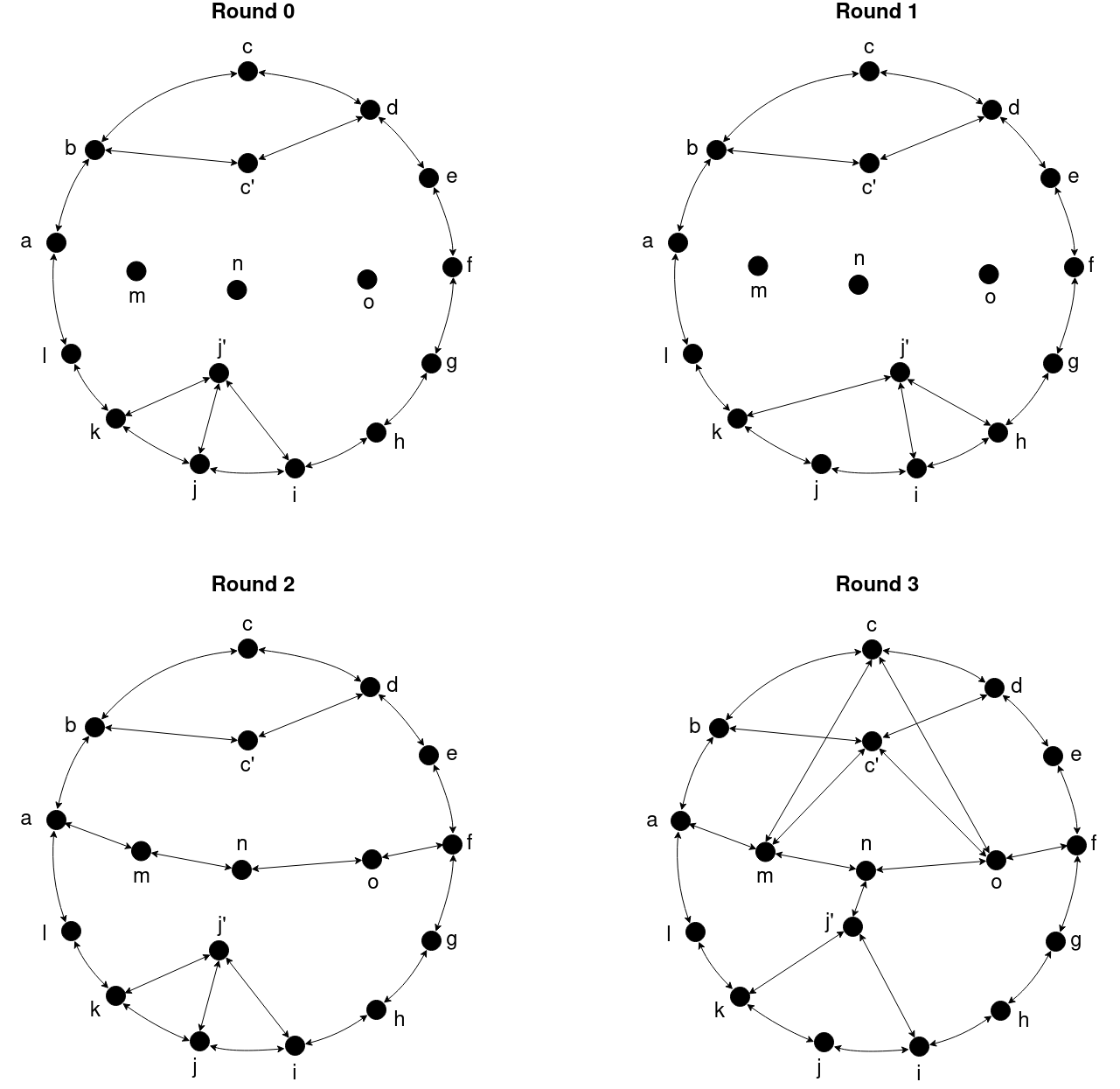}}
    \caption{A periodic time-varying graph with a period of 4, where nodes $c$ and $c'$ are $(4, 0)$-twins.}
    \label{fig_tvg}
\end{figure}

For $u\neq v$ two nodes of $\mathcal G$,
we say that $u$ and $v$ are $(\Delta, d)$-twins if, starting from a specific time instant $t_0 \in T$, their outside neighbourhoods have a non-empty intersection while the difference between these two sets is less than or equal to a fixed value $d$ for $\Delta$ consecutive rounds, 
that is, $|N_t(u) \setminus N_t(v) \setminus \{u, v\}| + |N_t(v) \setminus N_t(u) \setminus \{u, v\}| \leq d$, for every 
$t\in\llbracket t_0,t_0+\Delta-1\rrbracket$ if $t_0+\Delta\leq p$ and
for every $t\notin\rrbracket t_0+\Delta-1\mod p,t_0\llbracket$ 
otherwise.
Fig.~\ref{fig_tvg} exemplifies $(\Delta,d)$-twins.
We address the following problem.

\medskip

\noindent\textsc{Periodic$(\Delta,d)$-Twins}\\
\noindent\textsc{Input:} $\mathcal G=(G_t)_{t\in T}$ with $T=\llbracket0,p-1\rrbracket$ representing a time-varying graph of period $p$; integers $\Delta\leq p$ and $d$.\\
\noindent\textsc{Output:} for every node $u$ in $\mathcal G$, a list of all its $(\Delta,d)$-twins, that is, a list of $(v,t)$'s with $v$ being a node ID and $t$ the starting time instant where $u$ and $v$ become $d$-twins for $\Delta$ instants.

\medskip

Note that it is required that the period $p$ be given among the input of every node, but not the graph $\mathcal G$.
During each round $t$, every node has a local knowledge of $\mathcal G$, which is the size of its current neighbourhood, in $G_t$.
In the next Section~\ref{sec:disco}, we prove that \textsc{Periodic$(\Delta,d)$-Twins} can be solved after $2p$ rounds, using messages of size $O(\delta_{\mathcal G}\log n)$, where $n$ is the total number of nodes and $\delta_{\mathcal G}\leq n$ is the maximum degree of the graphs $G_t$'s.

\section{Distributed computation of $(\Delta,d)$-twins}
\label{sec:disco}
The main idea of our solution is based on the following property of static graphs, which hints that every node can compare its $1$-neighbourhood with the $1$-neighbourhood of any other node in its $2$-neighbourhood in order to determine whether that node is one of its $d$-twins.
For $u\neq v$ two nodes of a static graph $G=(V,E)$, we say that $u$ and $v$ are
$d$-twins if they have at least one common neighbour and that the difference between the sets of their outside neighbours is at most $d$, namely $|N(u) \setminus N(v) \setminus \{u, v\}| + |N(v) \setminus N(u) \setminus \{u, v\}| \leq d$.
A path of length $2$ between $u$ and $v$ is a path having $2$ edges and $3$ nodes, of the form $(u,w,v)$ with $uw\in E$ and $vw\in E$, for some $w\neq u$ and $w\neq v$.
\begin{prop}\label{prop:kpaths}
Let $G=(V,E)$ be an undirected static graph, let $u\neq v$ be two distinct nodes in $V$ with a common neighbour.
Then, $u$ and $v$ are $d$-twins if and only if $d\geq k$ or there are at least $k-d$ paths of length $2$ between $u$ and $v$, where $k = |N(u) \cup N(v) \setminus \{u,v\}|$.
\end{prop}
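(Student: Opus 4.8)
The plan is to count, for each neighbour $w$ of $u$ or $v$, its contribution both to $k=|N(u)\cup N(v)\setminus\{u,v\}|$ and to the number of length-$2$ paths between $u$ and $v$, and to compare this with the quantity $s:=|N(u)\setminus N(v)\setminus\{u,v\}|+|N(v)\setminus N(u)\setminus\{u,v\}|$ appearing in the definition of $d$-twin. First I would partition the set $N(u)\cup N(v)\setminus\{u,v\}$ into three parts: the common neighbours $C:=N(u)\cap N(v)\setminus\{u,v\}$, the private neighbours $A:=N(u)\setminus N(v)\setminus\{u,v\}$ of $u$, and the private neighbours $B:=N(v)\setminus N(u)\setminus\{u,v\}$ of $v$. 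By construction these are pairwise disjoint, so $k=|A|+|B|+|C|$, and $s=|A|+|B|$. Next I would observe that a vertex $w$ gives rise to a length-$2$ path $(u,w,v)$ precisely when $w$ is adjacent to both $u$ and $v$ and $w\notin\{u,v\}$, i.e. precisely when $w\in C$; hence the number of length-$2$ paths between $u$ and $v$ equals exactly $|C|$. Combining these two identities yields the key equation
\[
  s \;=\; k - |C| \;=\; k - (\text{number of length-}2\text{ paths between }u\text{ and }v).
\]

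With this identity in hand the proposition is a short chain of equivalences. By definition $u$ and $v$ are $d$-twins if and only if (they have a common neighbour, which is assumed in the statement, and) $s\le d$. Substituting $s=k-|C|$, this is equivalent to $k-|C|\le d$, i.e. $|C|\ge k-d$. Now I split on the sign of $k-d$: if $d\ge k$ then $k-d\le 0\le|C|$ holds automatically, matching the first disjunct in the proposition; if $d<k$ then $k-d\ge 1$ and the condition $|C|\ge k-d$ says exactly that there are at least $k-d$ paths of length $2$ between $u$ and $v$, matching the second disjunct. Conversely, either disjunct implies $|C|\ge k-d$ hence $s\le d$, so $u$ and $v$ are $d$-twins. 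This closes the biconditional.

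The only subtlety — and the one place a careless argument could go wrong — is the bookkeeping around the excluded vertices $u$ and $v$ themselves: one must be consistent about removing $\{u,v\}$ from every neighbourhood so that the three parts $A$, $B$, $C$ really do partition $N(u)\cup N(v)\setminus\{u,v\}$, and so that the length-$2$ paths are counted by intermediate vertices $w\notin\{u,v\}$ (the definition of a length-$2$ path already stipulates $w\ne u$ and $w\ne v$). In an undirected graph $w\in N(u)$ is symmetric with $u\in N(w)$, so no orientation issues arise; the graph being simple guarantees each such $w$ yields exactly one path $(u,w,v)$. Once the partition and the path count are pinned down correctly, the rest is the elementary case analysis above, so I expect no real obstacle beyond this definitional care.
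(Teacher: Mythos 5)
Your proposal is correct and follows essentially the same route as the paper's proof: both reduce the statement to the identity $|N(u)\setminus N(v)\setminus\{u,v\}|+|N(v)\setminus N(u)\setminus\{u,v\}|=k-(\text{number of length-}2\text{ paths})$, you via a three-part partition and the paper via the inclusion--exclusion identity $|A\cup B|-|A\cap B|=|A\setminus B|+|B\setminus A|$, which is the same computation. Your explicit handling of the $d\geq k$ case and of the exclusion of $\{u,v\}$ is careful but does not change the argument.
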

\begin{proof}
    The case when $d\geq k$ is clear.
    Otherwise let $A=N(u)\setminus\{u,v\}$ and $B=N(v)\setminus\{u,v\}$.
    Then, $k=|A\cup B|$.
    The paths of length $2$ between $u$ and $v$ are only paths passing through their common neighbours.
    Therefore, the number of such paths is $n_p=|A \cap B|$.
    We need to prove that $u$ and $v$ are $d$-twins if and only if $n_p\geq k-d$.
    From the definition of $d$-twins, we have $|A\setminus B|+|B\setminus A|\leq d$.
    Since $|A\cup B|-|A\cap B|=|A\setminus B|+|B\setminus A|$, this is equivalent to $|A\cup B|-|A\cap B|\leq d$,
    in other words, $k-n_p\leq d$.
    \qed
\end{proof}
\begin{remark}
Note that $u$ and $v$ could have very different neighbourhoods.
For instance when they have no common neighbours, $k$ could be large, while $u$ and $v$ are not $d$-twins until $d\geq k$.
\end{remark}

Let $\mathcal G=(G_t)_{t\in T}$ be a $p$-periodic time-varying graph with $T=\llbracket0,p-1\rrbracket$.
For every node to compute its list of $d$-twins, we would need $2$ rounds per static graph $G_t$, so that the node collects information about its $2$-neighbourhood.
Since the time-varying graph is periodic, we can proceed a first phase of $p$ rounds for passing every node's $1$-neighbourhood information to its neighbours.
Then, in a second phase consisting of $p$ extra rounds, every node can forward all the pieces of information it has already collected from its neighbours during the first phase.
The information received in this second phase informs every node about its $2$-neighbourhood.
There are however two drawbacks of the above idea.

Firstly, we need to control the size of the messages sent from each node, especially in the second phase.
In Lemma~\ref{lemma:d-twins} below, we prove that the information related to the entire neighbourhood is not necessary, but only the size of it.
Roughly, we plan to only send the size of the neighbourhood each time we need to forward this information.
More specifically, following Lemma~\ref{lemma:d-twins} each node $u$ in $\mathcal G$ can determine whether a node $v$ in $\mathcal G$ is such that $u$ and $v$ are $d$-twins at round $t$, by adding the number of $u$'s neighbours, obtained at round $t$, to the number of $v$'s neighbours,
obtained at round $t'=t+p$ and subtracting the number of common neighbours between them.
Then, if the result is less than $d$, they are $d$-twins, according to the inequality given in Lemma~\ref{lemma:d-twins}.

\begin{algorithm}[t!]
    \caption{Finding the $(\Delta,d)$-twins, all tables are dictionary data structures}\label{alg:main}
    \begin{algorithmic}[1]
    \Procedure{Periodic$(\Delta,d)$-Twins}{$p,\Delta,d$}
    
    \Comment{Initialisation}
        \State Table \texttt{Tab} with $p$ associating to each round the set of neighbours and the number of their neighbours
        \State Table \texttt{Count} associating to a given ID, the number of times it has been received in the current round
        \State $nbMsg \gets 0$ \Comment{The number of messages received.}
        \State Table \texttt{Neighbors} associating to a given ID its number of neighbours in the current round
        \State Table \texttt{Twins} storing the consecutive $d$-twins of the node, updated at each round, initially empty
        \State Table \texttt{dTwinsRound} associating to each round the found $d$-twins of the node
        \State Set \texttt{Delta\_d\_Twins} initially empty

    \For{each round}
        \State Call Algorithm~\ref{alg:roundI} \Comment{Sending a message}
        \State Call Algorithm~\ref{alg:receivingMsgDeltaTwins} \Comment{For received messages}
        
    \EndFor
    \If{$round = 2p-1$}
        \State $i \gets 0$
        \While{$i < p$ and $\texttt{Twins} \neq \emptyset$}
            \For{$id$ in \texttt{Twins}}
                \If{$id$ in \texttt{dTwinsRound[$i$]}}
                    \State $\texttt{Twins[$id$]} \gets \texttt{Twins[$id$]}+1$

                    \If{$\texttt{Twins}[id] = \Delta$}
                        \State Append $\{id, p + i - \Delta +1\}$ to \texttt{Delta\_d\_Twins}  

                        \State Remove $id$ from \texttt{Twins}
                            
                    \EndIf
                \Else
                    \State Remove $id$ from \texttt{Twins}
                \EndIf
            \EndFor
            \State $i \gets i + 1$
        \EndWhile
    
        \State \Return \texttt{Delta\_d\_Twins}
    \EndIf
    \EndProcedure
    \end{algorithmic}
\end{algorithm}
Secondly, we aim at computing $(\Delta,d)$-twins in $\mathcal G$, with $\Delta$ representing a continuous window of $\Delta$ consecutive time instants starting at $t_0\in T$.
For most $t_0$ this can be done using in the receiver node's internal storage a counter per sender node, which increments every time the receiver node detects a sender node as its $d$-twin at some $G_t$.
The counter is reset to zero whenever that sender node is detected as a non-$d$-twin at some (other) $G_t$.
However, cases involving extremities of interval $T$ such as when $t_0+\Delta\geq p$ will have their time window divided into two disjoint intervals: $\llbracket0,t_0+\Delta-1\mod p\rrbracket$ and $\llbracket t_0,p-1\rrbracket$.
If we proceed as with all other cases, the information will be ready at some round after the $2p$-th round for this case.
We can accelerate this to be as early as in the $2p$-th round using internal storage of a table in each receiver node, instead of the previously mentioned incremental counters.
Another positive consequence of using a table instead of using incremental counters is that, starting from round $p+\Delta$, every node has access in real time to its $(\Delta,d)$-twins of previous $\Delta$ rounds.
Then, at the $2p$-th round, every node has access to the full list of its $(\Delta,d)$-twins, where the computation terminates.

Algorithm~\ref{alg:main} implements the above ideas, calling as subroutines both Algorithms~\ref{alg:roundI}~and~\ref{alg:receivingMsgDeltaTwins} at every round.
All tables in Algorithm~\ref{alg:main} are dictionary data structures.
The result for every node $u$ will be stored in table \texttt{Delta\_d\_Twins}.
For its computation, node $u$ maintains furthermore two tables, called \texttt{Twins} and \texttt{Count}.
For a node $v\neq u$ with identifier $id$, a strictly positive value of \texttt{Twins[$id$]} means that $v$ and $u$ have been twin vertices for the last \texttt{Twins[$id$]} number of rounds.
Hence, when \texttt{Twins[$id$]} is at least $\Delta$, we append $id$ to the resulting list \texttt{Delta\_d\_Twins} along with the number of rounds where they start to be $(\Delta,d)$-twins.
This is implemented in Algorithm~\ref{alg:receivingMsgDeltaTwins}, lines 21-22.
\begin{algorithm}[t!]
    \caption{Send messages}\label{alg:roundI}
    \begin{algorithmic}[1]
    \Procedure{SendMessage}{$p,\Delta,d,round$}
        \If{$round < p$}
            \State $msg \gets <myId,\ nbNeighbors>$
            \State Send $msg$ to all neighbours.
        \ElsIf{$p\le round < 2*p$}
            \State $msg \gets \emptyset$ 
            \State $index \gets round-p$
            \For{$elem \in \texttt{Tab[$index$]}$}
                \State Append $elem$ to $msg$
            \EndFor
            \State Send $msg$ to all neighbours
        \EndIf
    \EndProcedure
    \end{algorithmic}
\end{algorithm}
\begin{algorithm}[t!]
    \caption{Upon Receiving a Message \textbf{msg}}\label{alg:receivingMsgDeltaTwins}
    \begin{algorithmic}[1]
    \Procedure{ReceiveMessage}{$p,\Delta,d,round$}
        \If{$round < p$}
            \State Append $msg$ to $\texttt{Tab[$round$]}$
        \ElsIf{$p\le round < 2*p$}
            \State $nbMsg \gets nbMsg + 1$
            \For{$<id, nb>$ in $msg$}
                \State $\texttt{Count[$id$]} \gets \texttt{Count[$id$]} + 1$
                \State $\texttt{Neighbors[$id$]} \gets nb$
                
            \EndFor
            
            \If{$nbMsg = nbNeighbors$}

                \For{each $id$ in \texttt{Twins}}
                    \If{$id$ not in \texttt{Count}}
                        \State Remove $id$ from \texttt{Twins}
                    \EndIf
                \EndFor
                
                \For{each $id$ in \texttt{Count}}
                    
                    \State $dTwin \gets nbNeighbors + \texttt{Neighbors[$id$]} - 2*\texttt{Count[$id$]}$
                    
                    \If{$dTwin \le d$}
                        
                        \State Append $id$ to $\texttt{dTwinsRound[$round - p$]}$
                        \If{$id$ in \texttt{Twins}}
                            \State $\texttt{Twins[$id$]} \gets \texttt{Twins[$id$]}+1$
                        \Else
                            \State $\texttt{Twins[$id$]} \gets 1$
                        \EndIf

                        \If{$\texttt{Twins[$id$]} \geq \Delta$}
                            \State Append $\{id, round-\Delta+1 \}$ to \texttt{Delta\_d\_Twins}  

                        \EndIf
                    \Else
                        \State Remove $id$ from \texttt{Twins}
                    \EndIf
                \EndFor
            \EndIf
        \EndIf

    \EndProcedure
    \end{algorithmic}

\end{algorithm}

In case we conclude that $id$ is not a twin with $u$, such as with Algorithm~\ref{alg:receivingMsgDeltaTwins}'s lines 12 and 24,
a quick way to save this information is to remove key $id$ from the (dictionary) table \texttt{Twins}.
Table \texttt{Count} is to be used in Algorithm~\ref{alg:receivingMsgDeltaTwins} and related to the current round.
We use an alternative computation for what is stored in table \texttt{Count}: rather than using Property~\ref{prop:kpaths} which would force us to count the number of paths of length $2$, we use the equivalent quantities showed in Lemma~\ref{lemma:d-twins} below instead.
These quantities from Lemma~\ref{lemma:d-twins} are encoded precisely at line 14 of Algorithm~\ref{alg:receivingMsgDeltaTwins}.
The size of the message in the second phase is composed by two additive terms.
The first term is the number of neighbours, multiplied by the size of the nodes ID's.
The second term is the size of the neighbourhood of every neighbour, which is upper bounded by the first term.
Thus, the maximum size of messages are bounded by $O(\delta_{\mathcal G}\log n)$, where $n$ is the total number of nodes and $\delta_{\mathcal G}\leq n$ is the maximum degree of the graphs $G_t$'s, \textit{cf.} Lemma~\ref{lemma:sizeOfMsg} below.
Fig.~\ref{fig_tvgmsgs} exemplifies the messages received in the second phase of every node in the example given in Fig.~\ref{fig_tvg}.

\begin{figure}[t!]
    \centerline{\includegraphics[scale=0.2]{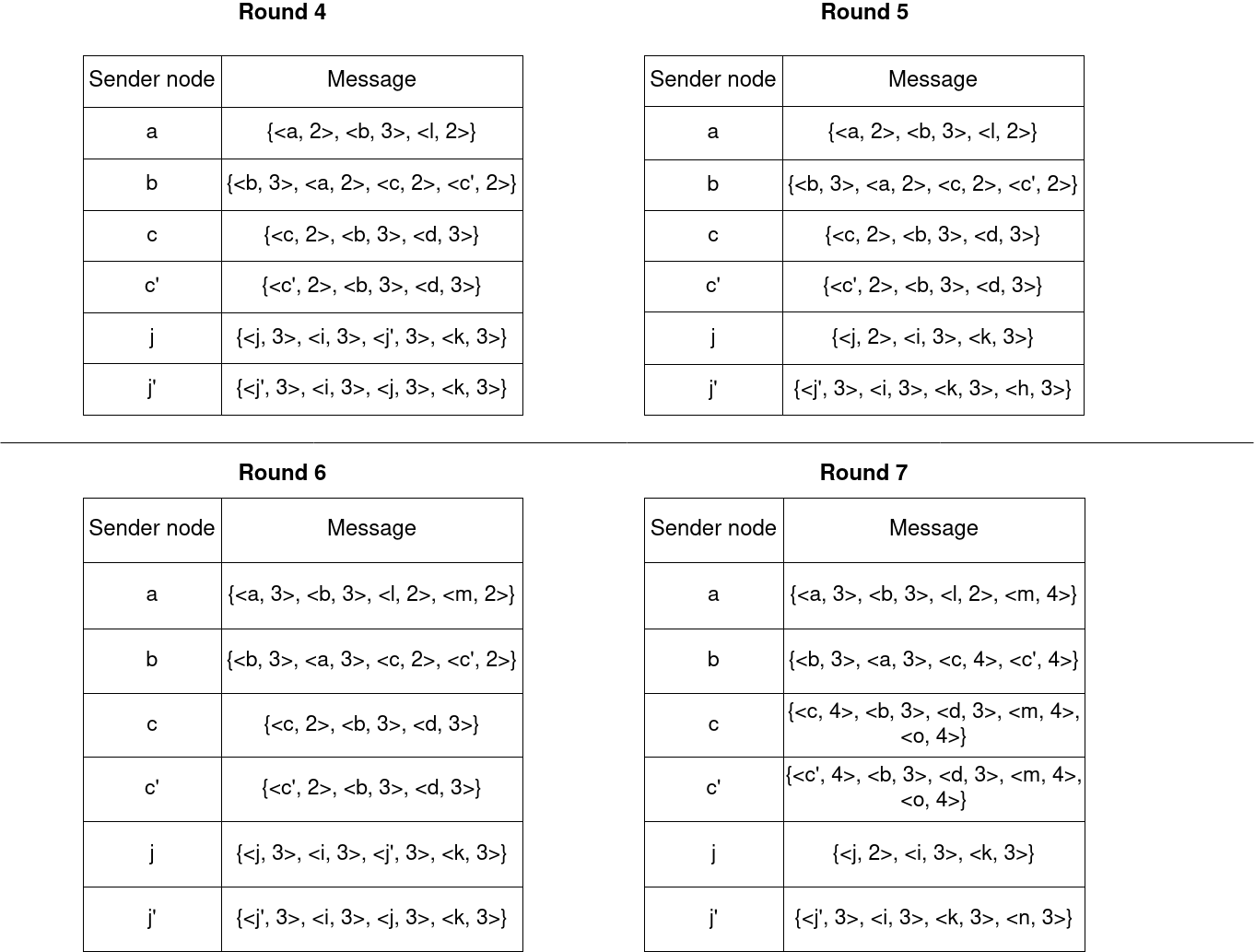}}
    \caption{Tables describing the messages sent to the direct neighbourhood of nodes in the example presented in Fig.~\ref{fig_tvg}. The messages described here-above are sent through the second period of the time-varying graph, corresponding to the information collected during the first period.}
    \label{fig_tvgmsgs}
\end{figure}

We now prove the correctness of the algorithm, as well as the maximum size of the messages used in the algorithm.
\begin{lemma}\label{lemma:sizeOfMsg}
    Given a p-periodic time-varying graph $\mathcal G=(G_t)_{t\in T}$, with $n$ nodes, if each node sends the list of IDs of its neighbours associated with the number of their neighbours collected at a round $t$ in Algorithm~\ref{alg:receivingMsgDeltaTwins} in line 3, then the maximum size of a message sent at round $t+p$ as stated in Algorithm~\ref{alg:roundI} lines 6-11 is $O(\delta_{\mathcal G}\log(n\delta_{\mathcal G}))$.
\end{lemma}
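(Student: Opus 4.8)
The plan is to unfold the message built in Algorithm~\ref{alg:roundI}, lines~6--11, and bound each of its constituents separately. Fix a node $v$ and a round $r$ with $p\le r<2p$, and set $t=r-p\in T$. By construction the message sent by $v$ at round $r$ is the concatenation of all elements stored in $\texttt{Tab}[t]$. By Algorithm~\ref{alg:receivingMsgDeltaTwins}, line~3, this table was filled during round $t$ of the first phase with exactly the messages $v$ received from its neighbours then, and by Algorithm~\ref{alg:roundI}, line~3, each such received message is a pair $\langle \mathit{id}(w),|N_t(w)|\rangle$ for a neighbour $w\in N_t(v)$. Hence the forwarded message is a list of $|N_t(v)|$ pairs, and it suffices to bound the number of pairs and the size of each pair.

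First I would bound the number of pairs: by definition of $\delta_{\mathcal G}$ as the maximum degree of the graphs $G_t$, we have $|N_t(v)|\le\delta_{\mathcal G}$. Next I would bound a single pair. Its first coordinate is a node ID, which by the non-anonymity assumption of the model is encoded on $O(\log n)$ bits. Its second coordinate is an integer equal to the degree of $w$ in $G_t$, hence at most $\delta_{\mathcal G}$, and is therefore encoded on $O(\log\delta_{\mathcal G})$ bits. So each pair has size $O(\log n+\log\delta_{\mathcal G})=O(\log(n\delta_{\mathcal G}))$. Combining, the total size of the message is at most $|N_t(v)|\cdot O(\log(n\delta_{\mathcal G}))=O(\delta_{\mathcal G}\log(n\delta_{\mathcal G}))$, and taking the maximum over all $v$ and all $r\in\llbracket p,2p-1\rrbracket$ preserves the bound since it does not depend on $v$ or $r$. (Since $\delta_{\mathcal G}\le n$, this is also $O(\delta_{\mathcal G}\log n)$, which is the bound advertised for the whole algorithm.)

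There is no genuine obstacle here; the statement is essentially a bookkeeping exercise once the contents of \texttt{Tab} and of the forwarded message are pinned down precisely. The only mild point requiring care is to argue that the second coordinate of each pair is bounded by $\delta_{\mathcal G}$ rather than by the trivial bound $n$: this holds because it records a neighbourhood size in a single snapshot $G_t$, not in the union of the $G_t$'s, which is exactly what lets us charge only $O(\log\delta_{\mathcal G})$ bits for it instead of $O(\log n)$.
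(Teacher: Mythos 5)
Your proposal is correct and follows essentially the same argument as the paper's proof: bound the number of forwarded pairs by $\delta_{\mathcal G}$ and the size of each pair by $O(\log n)+O(\log\delta_{\mathcal G})=O(\log(n\delta_{\mathcal G}))$, then multiply. Your version is slightly more careful in tracing how \texttt{Tab}$[t]$ is populated and re-emitted at round $t+p$, but the substance is identical.
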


\begin{proof}
  A node $u$ can have at most $\delta_{\mathcal G} = n-1$ neighbours. For every neighbour $w$ of $u$, $u$ needs to store the ID of $w$, which requires a size of $O(\log n)$, and the number of neighbours of $w$, which can be at most $\delta_{\mathcal G}$, requiring a size $\log \delta_{\mathcal G}$. So for a neighbour $w$ of $u$, $u$ stores information of size $O(\log n) + \log \delta_{\mathcal G} = O(\log n\delta_{\mathcal G})$, and for all neighbours of $u$, it stores information of size $O(\delta_{\mathcal G} \log n\delta_{\mathcal G})$, and this information represents the size of the message that $u$ will send in the second phase.
  \qed
\end{proof}

\begin{lemma}\label{lemma:d-twins}
    Given a p-periodic time-varying graph $\mathcal G=(G_t)_{t\in T}$, and two distinct nodes $u\neq v$,
    $u$ and $v$ are $(\Delta,d)$-twins if
    $|N_t(u) \setminus \{u, v\}| + |N_t(v) \setminus \{u, v\}| - 2\times|((N_t(u) \setminus \{u, v\}) \cap (N_t(v) \setminus \{u, v\}))| \le d$ for $t_0\leq t<t_0+\Delta$ that is, for $\Delta$ consecutive rounds.
\end{lemma}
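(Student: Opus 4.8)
The plan is to reduce the algebraic inequality in the statement to the set-theoretic definition of $d$-twins at each individual round, and then to invoke the definition of $(\Delta,d)$-twins together with periodicity. So first I would fix an arbitrary round $t$ among the $\Delta$ consecutive ones in the window, and abbreviate $A=N_t(u)\setminus\{u,v\}$ and $B=N_t(v)\setminus\{u,v\}$. The only real ingredient is the elementary identity, valid for any two finite sets, $|A|+|B|-2|A\cap B| = |A\setminus B|+|B\setminus A|$, which follows by partitioning $A=(A\setminus B)\sqcup(A\cap B)$ and $B=(B\setminus A)\sqcup(A\cap B)$ and summing cardinalities. (Equivalently, it can be read off Property~\ref{prop:kpaths}: with $k=|A\cup B|$ and $n_p=|A\cap B|$ the number of length-$2$ paths between $u$ and $v$, one has $k-n_p=|A\cup B|-|A\cap B|=|A\setminus B|+|B\setminus A|$, and $k-n_p\le d$ is exactly the $d$-twin inequality.)

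Next I would observe that the quantity appearing in the statement, namely $|N_t(u)\setminus\{u,v\}|+|N_t(v)\setminus\{u,v\}|-2\,|(N_t(u)\setminus\{u,v\})\cap(N_t(v)\setminus\{u,v\})|$, is literally $|A|+|B|-2|A\cap B|$, hence by the identity above it equals $|A\setminus B|+|B\setminus A|=|N_t(u)\setminus N_t(v)\setminus\{u,v\}|+|N_t(v)\setminus N_t(u)\setminus\{u,v\}|$. Therefore the hypothesis of the lemma says precisely that $u$ and $v$ are $d$-twins in $G_t$ for each $t$ with $t_0\le t<t_0+\Delta$ (read cyclically on $T=\mathbb Z/p\mathbb Z$ when $t_0+\Delta>p$, matching the definition given in Section~\ref{sec:def}). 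Since this holds for $\Delta$ consecutive rounds starting at $t_0$, the definition of $(\Delta,d)$-twins is met with witness $t_0$, and the implication follows. Note the identity is symmetric, so the converse holds as well, but only the stated direction is needed for the soundness of Algorithm~\ref{alg:main}.

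The computation itself is a one-line set identity, so the genuine care goes into two bookkeeping points. First, lining up the exclusion of $\{u,v\}$ and the cyclic wrap-around of the $\Delta$-window exactly with the definition of $(\Delta,d)$-twins, so that there is no off-by-one between "$t_0\le t<t_0+\Delta$" and the complement-of-an-interval phrasing used when $t_0+\Delta>p$. Second, and more substantively, the definition of $d$-twins also requires $N_t(u)\cap N_t(v)\setminus\{u,v\}\neq\emptyset$, which the stated inequality alone does not enforce; I would make this explicit, either by adding it as a hypothesis of the lemma or by justifying it from the way the quantities arise in Algorithm~\ref{alg:receivingMsgDeltaTwins}, where a candidate identifier $id$ only ever enters \texttt{Count} (and hence is only ever tested at line 14) after being forwarded through a common neighbour, so that $\texttt{Count}[id]\ge 1$ at the moment the inequality is checked.
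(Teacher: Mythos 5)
Your proof is correct and takes essentially the same route as the paper's: set $A=N_t(u)\setminus\{u,v\}$ and $B=N_t(v)\setminus\{u,v\}$, use the identity $|A|+|B|-2|A\cap B|=|A\setminus B|+|B\setminus A|$ to identify the stated quantity with the $d$-twin defect at each round, and conclude over the $\Delta$-window. Your extra observation that the inequality alone does not enforce the non-empty common neighbourhood required by the definition of $d$-twins is a genuine point the paper's proof also leaves implicit; it is indeed covered in the algorithm because an $id$ only enters \texttt{Count} after being relayed by a common neighbour.
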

\begin{proof}
The proof is very similar to that of Property~\ref{prop:kpaths}.
    Let $A=N_t(u)\setminus\{u,v\}$ and $B=N_t(v)\setminus\{u,v\}$.
    From the definition of $d$-twins, we have $|A\setminus B|+|B\setminus A|\leq d$.
    Since $|A|+|B|-2\times|A\cap B|=|A\setminus B|+|B\setminus A|$, this is equivalent to $|A|+|B|-2\times|A\cap B|\leq d$.
    \qed
\end{proof}

The latter lemma allows us to store the $d$-twins in Algorithm~\ref{alg:receivingMsgDeltaTwins} and check the number of consecutive round two nodes are $d$-twins in Algorithm~\ref{alg:receivingMsgDeltaTwins} line 26 and Algorithm~\ref{alg:main} line 20.
We have proved the following theorem.
\begin{theorem}\label{theo:main}
In a $p$-periodic time-varying graph $\mathcal G=(G_t)_{t\in T}$ where integers $p,\Delta,d$ are given as input to every node, problem \textsc{Periodic$(\Delta,d)$-Twins} can be solved
after $2p$ rounds, using messages of size $O(\delta_{\mathcal G}\log n)$, where $n$ is the total number of nodes and $\delta_{\mathcal G}\leq n$ is the maximum degree of the graphs $G_t$'s.
\end{theorem}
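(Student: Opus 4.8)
The plan is to verify, in turn, the message size, the round count, and the correctness of the list produced for each node. The message size is immediate: Lemma~\ref{lemma:sizeOfMsg} bounds the phase-two messages by $O(\delta_{\mathcal G}\log(n\delta_{\mathcal G}))$, and since $\mathcal G$ has $n$ nodes we have $\delta_{\mathcal G}\le n-1$, so $\log(n\delta_{\mathcal G})=O(\log n)$ and the bound becomes $O(\delta_{\mathcal G}\log n)$; the phase-one messages (one identifier and one degree) are smaller still. The round count is also immediate from the pseudocode: Algorithm~\ref{alg:main} runs its \textbf{for}-loop over rounds $0,\dots,2p-1$ and returns at round $2p-1$, hence uses exactly $2p$ rounds --- the real work is to show that the output is already complete and correct by that round.

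For correctness I would set up two layers of invariants. \emph{Per-round test.} During phase one (rounds $0\le t<p$), node $u$ stores in \texttt{Tab}$[t]$ the pairs $(\mathrm{id}(w),|N_t(w)|)$ for $w\in N_t(u)$. During phase two, at round $t+p$, node $w$ forwards \texttt{Tab}$[t]$ to its current neighbours (Algorithm~\ref{alg:roundI}, lines 6--11), which by periodicity are exactly $N_{t+p}(w)=N_t(w)$. Thus a fixed node $u$ receives the pair of a node $v$ once per common neighbour of $u$ and $v$ in $G_t$; after processing round $t+p$, the counter of $u$ for $v$ therefore equals $|N_t(u)\cap N_t(v)|$, the recorded degree of $v$ equals $|N_t(v)|$, and the local-knowledge value equals $|N_t(u)|$. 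A short computation, of the same kind as in the proof of Property~\ref{prop:kpaths} and up to the treatment of a possible edge between $u$ and $v$, shows that line~14 of Algorithm~\ref{alg:receivingMsgDeltaTwins} evaluates the left-hand side of the inequality in Lemma~\ref{lemma:d-twins}; by that lemma, the test \texttt{dTwin}$\le d$ holds iff $u$ and $v$ are $d$-twins in $G_t$. Moreover $v$ occurs in $u$'s bookkeeping at round $t+p$ exactly when $u,v$ have a common neighbour in $G_t$, so the non-empty-intersection clause is respected automatically and every $d$-twin of $u$ in $G_t$ is indeed examined.

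\emph{Assembling $(\Delta,d)$-twins.} I would prove by induction over the phase-two rounds that the entry of \texttt{Twins} for $v$ always equals the length of the maximal block of consecutive instants, ending at the current instant, over which $u$ and $v$ are $d$-twins (and $v$ is absent from \texttt{Twins} when this length is $0$); this is exactly what the increment-or-reset logic of Algorithm~\ref{alg:receivingMsgDeltaTwins} maintains, using the per-round test above. Consequently every window $\llbracket t_0,t_0+\Delta-1\rrbracket\subseteq\llbracket0,p-1\rrbracket$ over which $u,v$ are $d$-twins is detected and emitted at round $p+t_0+\Delta-1\le 2p-1$ with the correct starting instant. The remaining windows are those wrapping around the end of $T$, i.e.\ with $t_0+\Delta>p$; there the block of $d$-twin instants splits into a suffix and a prefix of $\llbracket0,p-1\rrbracket$, and one uses periodicity once more: being $d$-twins ``at instant $p+i$'' coincides with being $d$-twins at instant $i$, whose test is already recorded in \texttt{dTwinsRound}$[i]$ from phase two. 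The final \textbf{while}-loop of Algorithm~\ref{alg:main}, still executed within round $2p-1$, replays \texttt{dTwinsRound}$[0]$, \texttt{dTwinsRound}$[1]$, \dots\ so as to continue each surviving block past instant $p-1$ into the next period, and emits $(v,p+i-\Delta+1)$ the first time the continued block reaches length $\Delta$; one checks that $p+i-\Delta+1$ is then the genuine starting instant in $T$ of a wrapping window. Combining the two emission paths gives that, for every $(\Delta,d)$-twin $v$ of $u$, at least one pair $(v,t_0)$ with $t_0$ a valid starting instant is output, and that every output pair is of this form --- all by round $2p-1$.

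The main obstacle is precisely this boundary bookkeeping: one must check that the phase-two path of Algorithm~\ref{alg:receivingMsgDeltaTwins} and the round-$2p-1$ \textbf{while}-loop of Algorithm~\ref{alg:main} are consistent --- together they must report every $(\Delta,d)$-twin and never a spurious one, in particular for the wrapping windows and for the extreme value $\Delta=p$ --- and that the relabelling $p+i\mapsto i$ combined with periodicity is faithful, so that no computation needs a round beyond $2p-1$. This last point is exactly why the table \texttt{dTwinsRound} is kept rather than a single running counter per sender: it lets the wrap-around be resolved retroactively at the final round. The remaining ingredients --- the degree and intersection arithmetic, the termination of the \textbf{for}- and \textbf{while}-loops, and discarding a node's own identifier --- are routine once the two invariants above are established.
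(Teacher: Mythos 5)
Your proposal follows essentially the same route as the paper: the message-size bound via Lemma~\ref{lemma:sizeOfMsg}, the per-round $d$-twin test via Lemma~\ref{lemma:d-twins} with the common-neighbour count recovered from message multiplicities in the second phase, the \texttt{Twins} counter invariant, and the final replay of \texttt{dTwinsRound} at round $2p-1$ to resolve wrapping windows. You are in fact somewhat more explicit than the paper about the invariants and the $\{u,v\}$-exclusion subtlety in line~14, but the decomposition and key lemmas are identical.
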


\section{Twin sampling with $O(\log n)$ message size}
\label{sec:hash}
In a static graph, we can reduce the message size as follows.
According to Lemma~2\ in~\cite{HNT22}, every node $v$ can receive its neighbourhood in a first round, then apply a well selected hash function sampled from a universe $\mathcal U$ of hash functions and forward this to every neighbour $u$ in a second round.
When receiving the second round message, node $u$ can compute the value of $|N(u)\cap N(v)|$ within $\epsilon\max(n_u,n_v)$ with probability $1-\nu$, where $n_u=|N(u)|$ and $n_v=|N(v)|$.
The process uses $O(1)$ messages of size $O(\epsilon^4\log(1/\nu)+\log\log|\mathcal U|+\log\max(n_u,n_v))$ bits.
Whence, w.h.p.\ after $2$ rounds the inequality in our Lemma~\ref{lemma:d-twins} can be decided using messages of $O(\log n)$ bits.
For the dynamic case, the extension is similar to construction proposed in Section~\ref{sec:disco}.

\section{Conclusion and perspectives}
\label{sec:conclu}
We introduce the problem of finding $(\Delta,d)$-twins  and propose a distributed algorithm to compute them in any $p$-periodic time-varying graph $\mathcal G=(G_t)_{t\in\mathbb N}$ under a distributed model similar to the $1$-interval-connected network.
After $2p$ rounds, every node can compute the nodes that are its $(\Delta,d)$-twins using messages of size $O(\delta_\mathcal G\log n)$, where $n$ is the total number of nodes and $\delta_\mathcal G$ is the maximum degree of the graphs $G_t$'s.
Using techniques borrowed from~\cite{HNT22}, we reduce the message size down to $O(\log n)$ w.h.p.
Finding $(\Delta,d)$-twins can be useful in several ways.
For instance, it could be used for alternately scheduling sleeping times of the twin nodes to save resources, while maintaining connectivity for the rest of the network. Furthermore, it can be used in order to compute disjoint paths or disjoint broadcast trees.
As for the next steps of research, it would be useful to extend our algorithm to compute $\epsilon$-modules as defined in~\cite{HMZ20}.


\bibliographystyle{splncs04}
\bibliography{references}
\end{document}